\documentclass[12pt]{article}
\usepackage[a4paper, total={6in, 9.5in}]{geometry}
\usepackage{amsthm}
\newtheorem{Assumption}{Assumption}
\newtheorem{Theorem}{Theorem}
\usepackage{amsmath}
\usepackage{amsfonts}
\usepackage{authblk}
\usepackage{bm}
\usepackage{bbm}
\usepackage{amsthm}
\usepackage{comment}
\usepackage[nottoc]{tocbibind}
\usepackage{graphicx}
\usepackage{booktabs}
\usepackage{subcaption}
\usepackage{changepage}
\usepackage{natbib}
\usepackage{abstract}
\usepackage{lineno}
\usepackage{multirow}
\usepackage[colorlinks=true,citecolor=blue,runcolor=blue,linkcolor=blue, pdfborder={0 0 0}]{hyperref}
\title{Inference on multiple quantiles in regression models by a rank-score approach}
\usepackage{xcolor}

\newcommand{\keywords}[1]{%
  \par\vspace{.5em}\noindent\textbf{Keywords — }#1\par
}
\author[1]{Riccardo De Santis\footnote{Corresponding author:  \textit{riccardo.desantis@unipd.it}}}
\author[2]{Anna Vesely}
\author[3]{Angela Andreella}

\affil[1]{University of Padova, Department of Statistical Sciences, Padova, Italy} 
\affil[2] {University of Bologna, Department of Statistics, Bologna, Italy} 
\affil[3]{Ca' Foscari University of Venice, Department of Economics, Venice, Italy} 
\date{}

\begin{document}
\maketitle

\begin{abstract}
This paper tackles the challenge of performing multiple quantile regressions across different quantile levels and the associated problem of controlling the familywise error rate, an issue that is generally overlooked in practice. 
We propose a multivariate extension of the rank-score test and embed it within a closed-testing procedure to efficiently account for multiple testing. Then we further generalize the multivariate test to enhance statistical power against alternatives in selected directions. Theoretical foundations and simulation studies demonstrate that our method effectively controls the familywise error rate while achieving higher power than traditional corrections, such as Bonferroni. 
\end{abstract}

\keywords{Closed testing; Familywise error rate; Global quantile regression; Rank-score process. }

\section{Introduction}

Quantile regression extends linear regression by estimating a conditional quantile of a response of interest, providing a more comprehensive view of the relationship between covariates and the response beyond just the conditional mean. Fitting multiple quantile regressions at different quantile levels is crucial in many areas where the impact of some covariates varies across the distribution. For example, in finance, investment firms study the effect of macroeconomic variables on the lower quantiles of stock returns during economic downturns, focusing on worst-case scenarios for risk management and loss mitigation. In medical research, the effect of a treatment intended to reduce cholesterol may be more pronounced at the upper quantiles of the cholesterol distribution, particularly for patients with extremely high initial levels. Quantile regressions help identify which patients are most likely to benefit from the treatment, enhancing the effectiveness of medical interventions. In such contexts, analyzing a single quantile value alone could overlook significant results of interest.

Motivated by these considerations, empirical analyses routinely estimate quantile regressions at multiple, pre-specified quantile levels $\tau$ \citep{yu2003quantile}.
Common choices include reference intervals in medicine, such as $\tau \in \{0.025, 0.975\}$ \citep{jiang2024estimating}, or a set of clinically relevant centiles commonly used in pediatric growth chart construction, e.g., $\tau \in \{0.03, 0.10, 0.50, 0.90, 0.97\}$ \citep{kiserud2018world}. 
In economics, empirical applications often focus on quartiles or deciles of the conditional distribution, whereas in finance the emphasis is typically on lower extreme quantiles, such as $\tau \in \{0.05, 0.10\}$, to capture tail risk \citep{ul2025distributional}. 
Conversely, studies in climatology and ecology frequently target upper extreme quantiles, e.g., $\tau \in \{0.95, 0.9\}$, to analyze extreme high outcomes or limiting relationships \citep{mazzoglio2025mapping, cade2003gentle}. 
These quantile levels are usually selected a priori to address specific scientific or policy-relevant questions, rather than to provide a uniform characterization of the entire conditional distribution.

This practice, however, leads to a multiple testing problem~\citep{goeman2014multiple}, where performing $K$ separate tests increases the risk of false discoveries if $p$-values are not adjusted to control family errors such as familywise error (FWER) or false discovery rates. Although this issue is partially well understood from a statistical perspective, it is frequently overlooked in empirical applications: researchers often apply quantile regression across multiple quantiles and draw inferences on covariate effects without explicitly addressing multiple testing concerns. Numerous examples can be found in studies spanning markets~\citep{wang2012computing}, biology~\citep{agarwal2019quantile}, management~\citep{bien2016study,karimi2024r}, economics~\citep{fousekis2005demand,falk2012quantile}, public health~\citep{juvanhol2016factors}, health~\citep{cheah2022malaysian}, education~\citep{bassett2002quantile}, energy~\citep{cheng2020quantile}, and climate~\citep{chamaille2007detecting}. This widespread practice highlights a methodological gap, emphasizing the need for the field to develop and apply rigorous statistical methods that properly account for multiple comparisons and ensure the validity of the findings~\citep{ioannidis2005most}.

Within the quantile regression literature, the multiplicity issue is rarely addressed explicitly and, when it is, researchers typically rely on conservative adjustments such as Bonferroni or Holm corrections. To the best of our knowledge, only \cite{mrkvicka2023globalquantileregression} propose a multiple testing procedure specifically tailored to this framework; however, their approach has some theoretical limitations and demonstrates shortcomings regarding type I error control in their simulation study.
We thus focus on the problem of testing the evidence of a covariate effect across multiple quantile levels, where the levels form a finite set of quantiles $\{\tau_1, \dots, \tau_K\}$.

In contrast, a different approach based on stochastic processes focuses on regions of quantiles over a continuous subset of the whole interval $[0,1]$ \citep{sun2021model,park2017hypothesis, belloni2019conditional} or based on distributinal regression models \citep{klein2024distributional}. The reader might wonder whether the adoption of this functional approach, i.e., the analysis of the entire conditional distribution (or of continuous sub-regions) at once, is always preferable when doing inference at a finite discrete number of quantiles. This is not necessarily the case, but the appropriate inferential strategy depends on the specific scientific question. Functional approaches often rely on simultaneous confidence bands constructed from the supremum of a quantile process over a region of quantiles, which is approximated by a discretization of the problem through a grid of values or resampling-based procedures. As the width of this region increases, such procedures can become overly conservative, and post-hoc inference on subregions leads to double-dipping issues. When interest is explicitly focused on a pre-specified set of quantiles, as in the motivating examples above, procedures that optimally control the FWER across these quantiles are then more appropriate. 

We then propose a principled yet straightforward framework for performing valid inference across multiple quantile levels. A naive but valid approach would be to apply standard corrections, such as Bonferroni, to the set of $p$-values obtained from the $K$ separate quantile regressions. However, as demonstrated by~\cite{goeman2021only}, only closed testing procedures are admissible for strong FWER control; every other method is either equivalent to, or uniformly dominated by, some closed testing procedure. Consequently, we adopt the closed testing principle as the inferential backbone of our approach. The closed testing framework controls the FWER by testing all possible intersections of the individual hypotheses $H_0^1, \dots, H_0^K$ using valid local tests. In short, for any $j = 1,\dots, K$, we can reject $H_0^j$ at a pre-determined level $\alpha$ if all intersection hypotheses containing $H_0^j$ are rejected at that level $\alpha$. The validity and efficiency of the closed testing procedure thus hinge on the choice of these local tests. The commonly used Wald-type test for quantile regression~\citep{koenker1996quantile} suffers from finite-sample distortions, especially when heteroscedasticity affects the data, and slow convergence to its asymptotic distribution, even under independent and identically distributed errors~\citep{de2025closed, koenker2005quantile}.

A robust alternative is the rank-score test~\citep{gutenbrunner1993tests, koenker1999goodness, kocherginsky2005practical}, whose multivariate version, however, has not yet been fully developed and theoretically investigated. We therefore first propose the multivariate extension of the rank-score test and study its theoretical proprieties. We then develop a generalization of the multivariate rank-score test, based on alternative weighting matrices, and integrate the resulting class of statistics as the local test within the closed testing procedure. This approach increases analytical flexibility and guarantees valid control of type I errors across multiple quantile levels, thereby improving the reliability of statistical inferences from quantile regression analyses.

To facilitate the application of the proposed methodology, we provide an \texttt{R} package named \texttt{quasar} (QUAntile Selective inference Adjustment via Rank-scores) available on CRAN at \texttt{https://CRAN.R-project.org/package=quasar}.

The manuscript is organized as follows. Section~\ref{setting} presents the necessary background on quantile regression, while Section~\ref{closed_testing} introduces the closed testing procedure. Section~\ref{rank_score} details the theory of the multivariate version of the rank-score test. Section~\ref{gen_rank_test} extends the multivariate rank-score test by introducing a class of generalized test statistics based on alternative weighting matrices. Section~\ref{sim} presents simulation studies showing that the Wald-type test fails to control the type I error rate, whereas the proposed method effectively controls the FWER while maintaining competitive power compared to traditional corrections such as Bonferroni. Finally, Section~\ref{discussion} concludes the paper by summarizing the main findings and implications of our work.

\section{Setting}\label{setting}

Consider a sample of $n$ independent units. For each $i=1,\dots,n$, we observe a continuous response $y_{ni}\in\mathbb{R}$, a scalar covariate of interest
$x_{ni}\in\mathbb{R}$, and a $p$-dimensional vector of nuisance covariates $z_{ni}\in\mathbb{R}^p$, where $p = 1, \dots, n-2$. The first element of $z_{ni}$ equals $1$ for all units and acts as an intercept. Throughout, we make explicit the dependence on $n$ whenever relevant. 

We assume that the following location-scale model~\citep{koenker1999goodness} generates the data:
\begin{equation*}
Y_{ni}|x_{ni},z_{ni}=x_{ni}\beta+z_{ni}^\top\gamma+(x_{ni}\theta+z_{ni}^\top\eta)U_{ni},
\end{equation*}
where $\beta,\theta\in\mathbb{R}$, $\gamma,\eta\in\mathbb{R}^p$, and $U_{ni}$ are independent and identically distributed zero-mean errors from a distribution function $F$. Hence, $Y_{ni}$ is a random variable whose distribution depends on $F$, while $y_{ni}$ represents its realization. Further, denote $\Gamma_n$ as a diagonal $n\times n$ matrix with $i$th entry $\sigma_{ni}=|x_{ni}\theta+z_{ni}\eta|$; this specification allows for relaxation over the homoscedasticity assumption of the units. The only requirement we set is $\eta_1 \ne 0$.

We remark here that the location-scale specification is a standard and widely used benchmark in the quantile regression literature, as it delivers a tractable representation of conditional quantiles while allowing for heteroskedasticity; see, e.g., 
\cite{koenker2005quantile, koenker2010rank, koenker1999goodness, koenker1996quantile}.

For a given level $0<\tau <1$, the conditional \(\tau\)th quantile function of the response \( {Y}_{ni} \), given \( {x}_{ni} \) and \( {z}_{ni} \), is
\begin{equation*}
    Q_{Y_{ni} \mid x_{ni},z_{ni}}(\tau \mid x_{ni}, z_{ni}) = \beta(\tau) x_{ni} + \gamma(\tau) z_{ni},
\end{equation*}
where \( \beta(\tau)=\beta+\theta F^{-1}(\tau) \) is the quantile-specific coefficient for \( {x}_{ni} \), and \( \gamma(\tau)=\gamma+\eta F^{-1}(\tau) \) for \( {z}_{ni} \). All results in the manuscript rely on, sometimes implicit, conditioning on the observed covariates that define the design matrix, as is customary in regression-type models.

The purpose of this manuscript is based on the following problem: given a sample of $n$ elements, we want to test the effect of the target covariate at $K$ different quantile levels $\tau_1,\ldots,\tau_K$. Accordingly, we formulate the following set of $K$ null hypotheses, each tested against a two-sided alternative:
\begin{equation*}
\begin{cases}
H_0^1: \beta(\tau_1) = 0\\
\;\vdots\\
H_0^K: \beta(\tau_K) = 0.
\end{cases}
\end{equation*}
For simplicity, we will refer to these null hypotheses simply as hypotheses throughout the paper.

Although testing for the nullity of the coefficients is the most common practice when investigating statistical hypotheses, this manuscript will actually allow for any null values $\beta_0(\tau_1),\dots,\beta_0(\tau_K)$ of interest. Furthermore, one-sided alternatives can be easily derived, as well.

We introduce some minimal assumptions required for the remainder of the paper. These are not meant to limit practical applicability but to exclude cumbersome or degenerate cases. They are inspired by~\citet{gutenbrunner1992regression} and~\citet{koenker1999goodness} and are standard in inference procedures for quantile regressions.

\begin{Assumption}\label{ass:density}
    $F$ has continuous density $f$ with respect to the Lebesgue measure, and $f$ is positive and finite on \{$t: 0 < F(t) < 1 $\}.
\end{Assumption}

\begin{Assumption}\label{ass:intercept}
    The design matrix $M_n=\begin{bmatrix}
        X_n & Z_n
    \end{bmatrix}$ of dimension $n \times (p+1)$ contains an intercept.
\end{Assumption}

\begin{Assumption}\label{ass:design1}
    $||M_n||_{\infty}=o(n^{1/2})$, where  $||\cdot||_{\infty}=\max_{j = 1, \ldots, p+1}\sum_{i=1}^{n} |m_{ij}|$.
\end{Assumption}

\begin{Assumption}\label{ass:design2}
    $Q_n=n^{-1} M^\top_n  M_n \xrightarrow{n \xrightarrow{}\infty} Q$, where  $Q$
     is a $(p+1) \times (p+1)$  matrix. 
\end{Assumption}

\begin{Assumption}\label{ass:design3}
    For $n=1,2,\dots$ and $i \le n$, there exist two finite constants $\lambda_1, \lambda_2 >0$ such that $\lambda_1 < \sigma_{ni} < \lambda_2$. 
\end{Assumption}

\begin{Assumption}\label{ass:design4}
    $C_n=n^{-1} M^\top_n \Gamma^{-1}_n M_n \xrightarrow{n \xrightarrow{}\infty} C$, where $C$
     is a $(p+1) \times (p+1)$ matrix.
\end{Assumption}

\begin{Assumption}\label{ass:design5}
    $V_{n}=n^{-1} (X_{n}-\hat{X}_{n})^\top  (X_{n}-\hat{X}_{n}) \xrightarrow{n \xrightarrow{}\infty} V$, where $V$ is a scalar and $\hat{X}_n=Z_n(Z_n^\top \Gamma_n^{-1} Z_n)^{-1}Z^\top_n\Gamma^{-1}_n X_n$.
\end{Assumption}

Assumption~\ref{ass:density} formalizes that we restrict our attention to absolutely continuous response variables. Assumption~\ref{ass:intercept} asks that an intercept must always be included in the fitted model. Assumption~\ref{ass:design1} requires that each entry of the design matrix does not grow too fast; it is weaker than assuming finite elements over the entire design matrix. Assumption~\ref{ass:design3} posits that the variances of the response vector are bounded away from zero and infinity. Assumptions~\ref{ass:design2}-\ref{ass:design4}-\ref{ass:design5} guarantee the existence of well-defined limits for some matrices based on the variance-covariance matrix of the covariates.

With these conditions in place, the following section outlines the closed-testing procedure used to achieve strong control of the FWER.

\section{Closed testing principle}\label{closed_testing}
To test the hypotheses of interest $H_0^1, \ldots, H_0^K$ with FWER control at a pre-specified level $\alpha\in (0,1)$, we employ the closed testing principle~\citep{marcus1976closed}. This is the optimal framework for controlling the FWER, as any valid multiple testing method is either equivalent to a closed testing procedure or uniformly improved by one. This optimality property extends beyond FWER control and holds for other error measures, such as the false discovery proportion~\citep{goeman2021only}.

Let $\mathcal{K}=\{1,\ldots,K\}$ denote the index set of all hypotheses, and let $\mathcal{K}_0\subseteq\mathcal{K}$ represent the unknown subset of true hypotheses. The closed testing principle requires considering not only individual hypotheses, but also intersection hypotheses of the form
\[H_0^{\mathcal{C}}=\bigcap_{j\in\mathcal{C}} H_0^j,\qquad\mathcal{C}\subseteq\mathcal{K}.\]
Such an intersection hypothesis is true if and only if all individual hypotheses involved are true, i.e., $\mathcal{C}\subseteq\mathcal{K}_0$. Furthermore, the closed testing procedure requires a valid $\alpha$-level local test
$\psi : 2^{\mathcal{K}}\rightarrow\{0,1\}$, where $2^{\mathcal{K}}$ is the power set of $\mathcal{K}$ and $\psi(\mathcal{C})=1$ denotes rejection of $H_0^{\mathcal{C}}$ for $\mathcal{C}\subseteq\mathcal{K}$. The local test is said to be at level $\alpha$ if $\text{pr}(\psi(\mathcal{C}) = 1)\le \alpha$ for any $\mathcal{C} \subseteq\mathcal{K}_0$, where the probability is taken under the true, unknown data-generating process. By convention, the intersection hypothesis $H_0^{\emptyset}$ corresponding to the empty set is always true, and $\psi(\emptyset) = 0$.

Based on the local test, closed testing defines a global decision rule 
\[\bar{\psi} : 2^{\mathcal{K}}\rightarrow\{0,1\},\qquad \bar{\psi}=\min\{\psi(\mathcal{E}) : \mathcal{C}\subseteq\mathcal{E}\subseteq\mathcal{K}\}.\]
Thus, closed testing rejects a hypothesis $H_0^{\mathcal{C}}$ whenever the local test rejects all intersection hypotheses corresponding to supersets of $\mathcal{C}$. By construction, the procedure provides strong control of the FWER:
\[\text{pr}(\exists\,\mathcal{C}\subseteq \mathcal{K}_0 : \bar{\psi}(\mathcal{C})=1)\leq\alpha.\]
While the procedure applies to all intersection hypotheses, our focus here is, as is typically the case, on testing the individual hypotheses only.

For illustration, consider the following simple example, adapted from~\citet{goeman2011multiple}. Figure~\ref{fig1} shows the tree of all intersection hypotheses for $K=3$, reflecting the subset-superset relationships. Each level corresponds to index sets of the same size, and an arrow from a node $\mathcal{E}$ to a node $\mathcal{C}$ indicates that $\mathcal{C}\subset\mathcal{E}$. The closed testing procedure rejects an individual hypothesis, for example $H_1$, only if the local test rejects all intersection hypotheses that include it, namely $H_{\{1\}}=H_1$, $H_{\{1,2\}}=H_1\cap H_2$, $H_{\{1,3\}}=H_1\cap H_3$, and $H_{\{1,2,3\}}=H_1\cap H_2\cap H_3$.

\begin{figure}
\centering
\includegraphics[width=.5\textwidth]{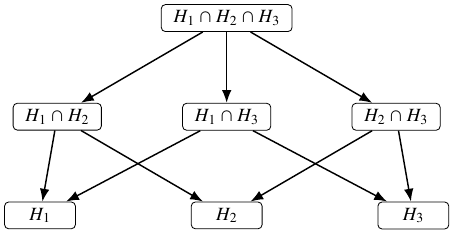}
\caption{Tree of intersection hypotheses for three individual hypotheses. Adapted from \citet{goeman2011multiple}.}
\label{fig1}
\end{figure}

In the context of quantile regression, the possible local tests are limited to either tests for individual hypotheses, such as the univariate rank-score test~\citep[Section~3.5.4]{koenker2005quantile}, combined with a Bonferroni correction, or the Wald-type test of~\citet[Section~3.5.3]{koenker2005quantile}. However, the Bonferroni correction is known to have notoriously low statistical power~\citep{pesarin2001multivariate}, while the Wald-type alternative converges slowly to its asymptotic distribution and exhibits distortions in finite samples. These limitations are illustrated, for example, in Section~\ref{sim} and in~\citet{de2025closed}.

The next section introduces a local test that constitutes the multivariate extension of the test based on regression rank-scores and overcomes the small-sample limitations of the Wald-type approach.

\section{Multivariate rank-score test}\label{rank_score}
The proposed local test relies on a rank-based testing approach. For the hypothesis $H_0^j$ corresponding to a single quantile of interest, the univariate test based on regression rank-scores is described in~\citet[Section~3.5.4]{koenker2005quantile}. This is a score-type test, meaning that the test statistic is derived from the sub-model implied by $H_0^j$. To our knowledge, the literature has focused exclusively on such univariate hypotheses. This section outlines the theoretical proposal of this manuscript, that is, the extension to intersection hypotheses over $k = 1, \dots, K$ quantiles, where the case $k=1$ coincides with the cited univariate test. For simplicity of notation, we will refer to the intersection hypothesis corresponding to the first $k$ quantiles $\{\tau_1,\ldots,\tau_k\}$, that is,
\begin{equation*}
H_0^{\{1,\ldots,k\}}=H_0^1\cap\ldots\cap H_0^k,\qquad H_0^j:\beta(\tau_j)=0.
\end{equation*}

The first key component of the proposed approach is the vector of regression rank-scores, derived from the estimation procedure of the quantile regression model~\citep[Section~6.2]{koenker2005quantile} and computed under the sub-model implied by the intersection null hypothesis, to which we will refer as the null model. Let $0<\tau<1$ denote a generic quantile level. The regression rank-scores are
\begin{equation*}
\hat{a}_n(\tau) = \arg\max_{a \in [0, 1]^n} \{a^\top Y_{n} : \; Z^\top_n a = (1 - \tau )Z^\top _n1_n\},    
\end{equation*}
where $Y_{n}$ denotes the $n$-dimensional vector of the responses and $1_n$ is an $n$-dimensional vector of ones. The $n$-vector $\hat{a}_n(\tau)$ represents the dual solution to the quantile regression problem
\begin{equation*}
\hat{\gamma}(\tau)= \arg\min_{\gamma \in \mathbb{R}^{p}}\sum_{i=1}^n \rho_\tau (y_{ni}-z_{ni}^\top\gamma),    
\end{equation*}
where $\rho_\tau(u)=u(\tau-\mathbf{1}\{u<0\})$ denotes the quantile loss function, and $\mathbf{1}\{\cdot\}$ is the indicator function. Notably, the regression rank-scores are $\tau$-dependent as explicitly stated.

As detailed by \citet{koenker2005quantile}, the score function is obtained by integrating an appropriate score-generating function $\phi:(0,1) \to \mathbb{R}$, which must be non-decreasing and square-integrable. Letting $\hat{a}_{ni}(\tau)$ denote the $i$th component of $\hat{a}_{n}(\tau)$, the score function is thus defined as
\begin{equation*}
    \hat{b}_{ni}(\tau)=\int^1_0 \phi(u) \, d\hat{a}_{ni}(u).
\end{equation*}

While \citet{gutenbrunner1993tests} introduced general score-generating functions $\phi(\cdot)$ for shift-location problems, \citet{koenker1999goodness} advocated the use of a quantile-specific function for $\tau$-specific problems:
\begin{equation*}
    \phi_\tau(t)=\tau-\mathbf{1}\{t < \tau\},
\end{equation*}
which yields
\begin{equation*}
    \hat{b}_{ni}(\tau)=\hat{a}_{ni}(\tau)-(1-\tau).
\end{equation*}

Let $d_{ni}$ denote the $i$th entry of the vector $D_n=X_n-\hat{X}_n$. Following \citet{gutenbrunner1992regression}, the quantile rank-score process is
\begin{equation}\label{eq:hatprocess}
    \hat{W}^d_n=\left\{\hat{W}_n^d(\tau)=n^{-1/2}\sum_{i=1}^n d_{ni} [\hat{a}_{ni}(\tau)-(1-\tau)], \; 0< \tau <1\right\}.
\end{equation}
This stochastic process forms the basis of the score-based test statistic proposed below. It is characterized by the following theorem from \citet[Theorem~1(iii)]{gutenbrunner1992regression}.

\begin{Theorem}\label{theo:process}
    Let Assumptions~1--7 hold and suppose that $H_0^{\{1,\ldots,k\}}$ is true. Then
    \[\hat{W}^d_n \xrightarrow{d} V^{1/2}W^*,\]
    where $W^*$ is a Brownian bridge. The convergence in distribution of the process is understood in terms of the Prokhorov topology on the Skorokhod space $\mathbb{D}(0,1)$.
\end{Theorem}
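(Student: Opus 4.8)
The statement coincides with Theorem~1(iii) of \citet{gutenbrunner1992regression}, so the plan is to verify that Assumptions~\ref{ass:density}--\ref{ass:design5} imply the regularity conditions used there and to identify the limiting scale with $V^{1/2}$, the square root of the limit in Assumption~\ref{ass:design5}; I sketch the structure of the argument below, since it explains why estimating the nuisance parameter $\gamma$ leaves no trace in the limit. As is standard for weak convergence in $\mathbb{D}(0,1)$, I would establish convergence of the finite-dimensional distributions and tightness separately, after first reducing $\hat{W}^d_n$ to a sum of independent terms by a uniform asymptotic linearisation.

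For the linearisation, I would use that under $H_0^{\{1,\dots,k\}}$ the $Z_n$-only regression quantile $\hat\gamma(\tau)$ correctly targets $\gamma(\tau)$ and satisfies $\hat\gamma(\tau)-\gamma(\tau)=O_p(n^{-1/2})$ uniformly over $\tau$ in compact subintervals of $(0,1)$, by the Bahadur expansion of the regression-quantile process (Assumptions~\ref{ass:density}--\ref{ass:design4}). Linear-programming duality forces $\hat a_{ni}(\tau)=\mathbf{1}\{y_{ni}>z_{ni}^\top\hat\gamma(\tau)\}$ for all but at most $p$ ``active'' indices, and Assumption~\ref{ass:design3} yields $\max_i|d_{ni}|=o(n^{1/2})$, so the active indices contribute $o_p(1)$ to $\hat{W}^d_n(\tau)$. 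Writing $y_{ni}-z_{ni}^\top\gamma(\tau)=\sigma_{ni}\{U_{ni}-F^{-1}(\tau)\}$ and expanding the indicator about the true value, one should obtain, uniformly in $\tau$ on compacta,
\begin{equation*}
\hat{W}^d_n(\tau)=n^{-1/2}\sum_{i=1}^n d_{ni}\bigl[\mathbf{1}\{U_{ni}>F^{-1}(\tau)\}-(1-\tau)\bigr]-f\!\bigl(F^{-1}(\tau)\bigr)\,n^{-1/2}\bigl(D_n^\top\Gamma_n^{-1}Z_n\bigr)\bigl(\hat\gamma(\tau)-\gamma(\tau)\bigr)+o_p(1).
\end{equation*}
The key observation is that $D_n^\top\Gamma_n^{-1}Z_n=(X_n-\hat X_n)^\top\Gamma_n^{-1}Z_n=0$ by the very definition of $\hat X_n$, so the middle term vanishes identically; this is precisely why the projection defining $\hat X_n$ is weighted by $\Gamma_n^{-1}$, and why the limiting variance collapses to the single scalar $V$ of Assumption~\ref{ass:design5}, with Assumptions~\ref{ass:design2} and~\ref{ass:design4} ensuring the auxiliary matrices converge.

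What then remains is the weighted uniform empirical process $\widetilde W_n(\tau)=n^{-1/2}\sum_i d_{ni}[\mathbf{1}\{F(U_{ni})>\tau\}-(1-\tau)]$, with the $F(U_{ni})$ i.i.d.\ Uniform$(0,1)$. I would obtain its finite-dimensional limits from the Cram\'er--Wold device and the Lindeberg central limit theorem: the Lindeberg condition holds because $n^{-1/2}\max_i|d_{ni}|=o(1)$ (Assumptions~\ref{ass:design1} and~\ref{ass:design3}), and $n^{-1}\sum_i d_{ni}^2\to V$ (Assumption~\ref{ass:design5}) gives the limiting covariance $\{\min(\tau_a,\tau_b)-\tau_a\tau_b\}V$, i.e.\ that of $V^{1/2}W^*$. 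Tightness in $\mathbb{D}(0,1)$ I would get from Billingsley's moment criterion: for $t<u<s$ the diagonal contributions to $E[|\widetilde W_n(s)-\widetilde W_n(u)|^2|\widetilde W_n(u)-\widetilde W_n(t)|^2]$ vanish by disjointness of the increments, and independence bounds the remaining cross terms by $(n^{-1}\sum_i d_{ni}^2)^2|s-t|^2=O(|s-t|^2)$.

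The hard part is the first step: proving the asymptotic linear representation \emph{uniformly} in $\tau$ while controlling the combinatorial active-set structure of the linear-programming dual, and handling the endpoints $\tau\in\{0,1\}$, where the empirical process degenerates --- in practice one argues on $[\varepsilon,1-\varepsilon]$, where $f$ is bounded away from $0$ and $\infty$ on the corresponding compact set by Assumption~\ref{ass:density}, and then lets $\varepsilon\downarrow0$. This is exactly the content of Theorem~1 of \citet{gutenbrunner1992regression}, on which the proof ultimately rests.
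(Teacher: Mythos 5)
Your proposal takes essentially the same route as the paper, which in fact offers no proof of this theorem at all: it presents the statement as a restricted reformulation of Theorem~1(iii) of \citet{gutenbrunner1992regression}, precisely the result on which your argument ultimately rests. Your added sketch is sound and consistent with the paper's later use of the same linearization in the proof of Theorem~\ref{theo:alt} --- in particular the observation that $D_n^\top\Gamma_n^{-1}Z_n=0$ by construction of $\hat X_n$, so the nuisance-estimation term vanishes and the limiting scale is the $V$ of Assumption~\ref{ass:design5}.
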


The original theorem is stated for general triangular arrays of possibly multidimensional vectors, with an unspecified structure in place of the elements $d_{ni}$.
Theorem \ref{theo:process} presents a more restrictive formulation, sufficient for the purposes of this paper, yet flexible enough to allow extensions, for example, to inference on a multivariate target coefficient.

The distribution of the score vector over multiple quantiles is obtained as follows. Define
\begin{equation*}
    S_n=n^{-1/2}(X_n-\hat{X}_n)^\top  \hat{b}_n
\end{equation*}
and the $k \times k$ matrix $\Delta$ whose $(\ell,r)$th entry is $\delta_{\ell r}=\,\min\{\tau_\ell,\tau_r\}-\tau_\ell\tau_r$. The test statistic for the local test is
\begin{equation}
    T_n=S_n^\top A^{-1}_nS_n, \label{eq:tn}
\end{equation}
where $A_n = V_n\Delta$, and $V_n$ is the scalar introduced in Assumption~\ref{ass:design5}.

\begin{Theorem}\label{theo:tstatnull}
    Let Assumptions~1--7 hold and suppose that $H_0^{\{1,\ldots,k\}}$ is true. Then $T_n\xrightarrow{d} T \sim \chi^2_k$, where $\chi^2_k$ denotes a chi-squared distribution with $k$ degrees of freedom.
\end{Theorem}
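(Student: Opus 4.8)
The plan is to recognize $T_n$ as a quadratic form in the finite-dimensional evaluation of the rank-score process of Theorem~\ref{theo:process}, and then to combine weak convergence of that process with Slutsky-type arguments and the classical quadratic-form characterization of the chi-squared law.

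First I would note that the $\ell$-th coordinate of $S_n$ is exactly $\hat W_n^d(\tau_\ell)$, so that $S_n=\big(\hat W_n^d(\tau_1),\ldots,\hat W_n^d(\tau_k)\big)^\top$ is the image of the process $\hat W_n^d$ under the evaluation functional $\pi\colon x\mapsto\big(x(\tau_1),\ldots,x(\tau_k)\big)$ on $\mathbb{D}(0,1)$. Because the quantile levels are fixed points of the open interval $(0,1)$ and the limiting Brownian bridge $V^{1/2}W^*$ has almost surely continuous sample paths, $\pi$ is continuous at the limit, so weak convergence in the Prokhorov/Skorokhod topology transfers to the finite-dimensional marginals: by Theorem~\ref{theo:process} and the continuous mapping theorem, $S_n\xrightarrow{d}V^{1/2}\big(W^*(\tau_1),\ldots,W^*(\tau_k)\big)$. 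The covariance function of a Brownian bridge is $\mathrm{Cov}\{W^*(s),W^*(t)\}=\min\{s,t\}-st$, which is precisely $\delta_{\ell r}$; hence the limit law is $\mathcal{N}_k(0,V\Delta)$.

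Next I would deal with the normalizing matrix. Since $A_n=V_n\Delta$ with $\Delta$ fixed and $V_n\to V$ by Assumption~\ref{ass:design5}, one has $A_n\to V\Delta=:A$ deterministically. The matrix $\Delta$ is the covariance of a non-degenerate Gaussian vector evaluated at the distinct interior points $\tau_1<\cdots<\tau_k$, hence positive definite; together with $V>0$ (part of the standing non-degeneracy of the design, under which $X_n$ is not collinear with $Z_n$) this makes $A$ invertible, so matrix inversion is continuous at $A$. Writing $Z\sim\mathcal{N}_k(0,A)$, joint convergence $(S_n,A_n)\xrightarrow{d}(Z,A)$ follows from Slutsky's theorem, and the continuous mapping theorem applied to $(s,M)\mapsto s^\top M^{-1}s$ on $\mathbb{R}^k$ times the set of invertible $k\times k$ matrices yields $T_n=S_n^\top A_n^{-1}S_n\xrightarrow{d}Z^\top A^{-1}Z$. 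Finally, writing $Z=A^{1/2}G$ with $G\sim\mathcal{N}_k(0,I_k)$ gives $Z^\top A^{-1}Z=\lVert G\rVert^2\sim\chi^2_k$, which is the assertion.

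The main obstacle is the transfer from weak convergence in the Skorokhod space to convergence of the finite-dimensional distributions at $\tau_1,\ldots,\tau_k$: one must argue with care that the evaluation map is continuous at Brownian-bridge paths, so that the projected sequence indeed converges weakly to the projected limit, while keeping track of the fact that Theorem~\ref{theo:process} lives on $\mathbb{D}(0,1)$ with the quantile levels safely interior. A secondary point worth spelling out is that $V>0$, so that $A^{-1}$ is well defined and the limit $\chi^2_k$ is genuinely $k$-dimensional; all remaining steps are routine.
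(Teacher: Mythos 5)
Your proposal is correct and follows essentially the same route as the paper: finite-dimensional convergence of the rank-score process at $\tau_1,\ldots,\tau_k$ via Theorem~\ref{theo:process}, the Brownian-bridge covariance $\min\{\tau_\ell,\tau_r\}-\tau_\ell\tau_r$ giving $S_n\xrightarrow{d}\mathcal{N}_k(0,V\Delta)$, Slutsky for $A_n\to A$, and the quadratic-form characterization of the $\chi^2_k$ law. If anything, you spell out steps the paper leaves implicit (continuity of the evaluation map on $\mathbb{D}(0,1)$ at the continuous limit paths, and positive definiteness of $\Delta$ and $V$ ensuring $A^{-1}$ exists), which is a welcome addition rather than a deviation.
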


\begin{proof}
The result follows directly from properties of the Brownian bridge. For two points $\ell,r$, such that $\ell<r$, we have
\begin{equation*}
    \begin{bmatrix}
W^*(\ell) \\ W^*(r) 
\end{bmatrix} \sim \mathcal{N}\left(\begin{bmatrix}
0 \\ 0
\end{bmatrix}, 
\begin{bmatrix}
\ell(1-\ell) & \ell(1-r) \\
\ell(1-r) & r(1-r)
\end{bmatrix}\right)
\end{equation*}
where $\mathcal{N}$ denotes the multivariate normal distribution. Multiplying the Brownian bridge by a constant $c$ scales each element of the covariance matrix by $c^2$. By Theorem~\ref{theo:process} and Slutsky's theorem, it follows that, for $k$ points in $(0,1)$, the limiting distribution of $S_n$ is $k$-variate normal with zero mean vector and covariance matrix $A$, where $A$ is a $k \times k$ matrix whose $(\ell,r)$th entry equals $V \delta_{\ell r}$. The final claim follows from the relationship between the normal and chi-squared distributions.
\end{proof}

Theorem~\ref{theo:tstatnull} enables rank-score-based testing of hypotheses across multiple quantiles simultaneously. Combined with the closed-testing framework described in Section~\ref{closed_testing}, it allows inference on individual quantiles while properly adjusting for multiplicity. Finally, the extension to the multivariate case, where $X_n$ has multiple columns, is straightforward, albeit involving more tedious matrix computations.

After establishing the null distribution of the test statistic, we characterize its behavior under alternatives. In particular, its distribution can be derived under local alternatives of the form
\begin{equation*}
H_n^{\{1,\ldots,k\}}=H_n^1\cap\ldots\cap H_n^k,\qquad H_n^j:\beta(\tau_j) = {n}^{-1/2}h(\tau_j),
\end{equation*}
where $h(\cdot)>0$ is a finite constant. In such cases, we obtain the following result. 
\begin{Theorem}\label{theo:alt}
    Let Assumptions~1--7 hold and suppose that $H_n^{\{1,\ldots,k\}}$ is true. Then $T_n\xrightarrow{d} T \sim \chi^2_k(\zeta)$, where $\chi^2_k(\zeta)$ denotes a noncentral chi-squared distribution with $k$ degrees of freedom. The noncentrality parameter is $\zeta=g^\top A^{-1} g$, where $g\in\mathbb{R}^k$ is a vector with components 
    \[g_j=C^{11}f \circ F^{-1}(\tau_j)\, h(\tau_j),\quad j =1,\ldots,k.\]
    Here $f \circ F(\cdot)=f(F(\cdot))$, and $C^{11}=(C^{-1})_{11}$ is the $(1,1)$ element of the inverse of $C$.
\end{Theorem}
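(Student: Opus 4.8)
The plan is to reduce the claim to a shifted central limit theorem for the score vector $S_n$ under the local alternative, after which the noncentral chi-squared limit follows by the same continuous-mapping argument used for Theorem~\ref{theo:tstatnull}. Concretely, I would first establish that, under $H_n^{\{1,\ldots,k\}}$,
\[
S_n \xrightarrow{d} \mathcal{N}(g, A),
\]
with $A$ the matrix of entries $V\delta_{\ell r}$ from Theorem~\ref{theo:tstatnull} and $g$ the drift vector in the statement. Granting this, $A_n = V_n\Delta \to V\Delta = A$ by Assumption~\ref{ass:design5} (the weights defining $V_n$ and $\hat{X}_n$ converge to their null values since $\theta = n^{-1/2}t \to 0$ along the local sequence), and $A$ is nonsingular because the $\tau_j$ are distinct and $V>0$; so by Slutsky's theorem and the continuous map $v\mapsto v^\top A^{-1}v$ we get $T_n = S_n^\top A_n^{-1}S_n \xrightarrow{d} Z^\top A^{-1}Z$ with $Z\sim\mathcal{N}(g,A)$. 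Writing $A^{-1/2}Z\sim\mathcal{N}(A^{-1/2}g,I_k)$ and using that the squared Euclidean norm of a $\mathcal{N}(\mu,I_k)$ vector is $\chi^2_k(\|\mu\|^2)$, this limit is $\chi^2_k(g^\top A^{-1}g)$, i.e.\ $\zeta = g^\top A^{-1} g$.

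The core is thus the shifted CLT for $S_n$, which I would obtain as the local-alternative counterpart of Theorem~\ref{theo:process}: under $H_n^{\{1,\ldots,k\}}$, $\hat{W}^d_n \xrightarrow{d} V^{1/2}W^* + \mu$ in $\mathbb{D}(0,1)$ for a deterministic drift $\mu$ with $\mu(\tau_j) = g_j$, so that evaluating at $\{\tau_1,\ldots,\tau_k\}$ yields $S_n \xrightarrow{d}\mathcal{N}(g,A)$. A concrete way to identify the drift uses the uniform asymptotic-linearity (Bahadur/Hájek-type) representation of the regression rank scores already underlying Theorem~\ref{theo:process}: $\hat{a}_{ni}(\tau) - (1-\tau) \approx \tau - \mathbf{1}\{r_{ni}(\tau)\le 0\}$ for the null-model residuals $r_{ni}(\tau) = Y_{ni} - z_{ni}^\top\hat\gamma(\tau)$. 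Under $H_n$ the conditional $\tau$-quantile of $Y_{ni}$ is raised by $n^{-1/2}x_{ni}h(\tau)$, while $\hat\gamma(\tau)$ absorbs the $Z$-part of that shift in the density-weighted metric $Z_n^\top\Gamma_n^{-1}Z_n$ governing the quantile-regression information, so $r_{ni}(\tau) \approx n^{-1/2}h(\tau)\,d_{ni} + \sigma_{ni}\bigl(U_{ni}-F^{-1}(\tau)\bigr)$ with $d_{ni}$ the $i$th entry of $D_n = X_n - \hat{X}_n$. A first-order expansion gives $\mathrm{E}[\hat{a}_{ni}(\tau)-(1-\tau)] \approx n^{-1/2} f\circ F^{-1}(\tau)\,h(\tau)\,d_{ni}/\sigma_{ni}$, and summing against $d_{ni}$ with the $n^{-1/2}$ normalization yields the drift via $n^{-1}\sum_i d_{ni}^2/\sigma_{ni} = n^{-1}D_n^\top\Gamma_n^{-1}D_n$, whose limit is the Schur complement of $C$ with respect to the $Z$-block and produces the $C^{11}$ factor in $g_j$; the fluctuation part of $\hat{W}^d_n$ is unaffected to first order, so its limit remains $V^{1/2}W^*$ by Theorem~\ref{theo:process}. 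The cleanest way to make all of this rigorous is contiguity: the heteroscedastic location-shift model is locally asymptotically normal in $(b,t)$ with $\beta = n^{-1/2}b$, $\theta = n^{-1/2}t$, $h(\tau) = b + tF^{-1}(\tau)$, so by Le Cam's first lemma the law under $H_n$ is contiguous to that under $H_0$; computing the joint limit of $\bigl(S_n,\log(\mathrm{d}P_{H_n}/\mathrm{d}P_{H_0})\bigr)$ under $H_0$, which is $(k+1)$-variate normal with marginal $S_n\to\mathcal{N}(0,A)$, and invoking Le Cam's third lemma, the limiting mean of $S_n$ under $H_n$ equals the asymptotic covariance of $S_n$ with the log-likelihood ratio, which one checks is $g$.

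The main obstacle is establishing this drift rigorously. For the direct route the two nontrivial points are (i) that the $o_p(1)$ remainders in the asymptotic-linearity representation of $\hat{a}_n(\tau)$, uniform in $\tau\in(0,1)$, persist under the perturbed law — which is exactly what contiguity delivers — and (ii) pinning down that the residuals shift in the direction $d_{ni}$ rather than $x_{ni}$, which hinges on the first-order behaviour of $\hat\gamma(\tau)$ under the omitted-covariate perturbation and is where the $\Gamma_n^{-1}$-weighting, hence the block/Schur-complement structure of $C$ and the factor $C^{11}$, enter. For the contiguity route the work concentrates in the LAN expansion for the heteroscedastic location-shift model and in the covariance computation $\lim\mathrm{Cov}\bigl(S_n,\log(\mathrm{d}P_{H_n}/\mathrm{d}P_{H_0})\bigr)=g$; the remaining steps — tightness of the process, nonsingularity of $\Delta$, and the continuous-mapping conclusion — are routine given Theorems~\ref{theo:process} and~\ref{theo:tstatnull}.
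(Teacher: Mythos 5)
Your proposal is correct and follows essentially the same route as the paper: the paper also reduces the claim to the shifted limit $S_n \xrightarrow{d} \mathcal{N}(g,A)$, obtained from the asymptotic representation of the rank-score process (the empirical-process decomposition from the proof of Theorem~1 in \citet{gutenbrunner1992regression}) together with a first-order linearization in which the unmodelled $n^{-1/2}h(\tau)X_n$ component yields the deterministic drift $n^{-1}f\circ F^{-1}(\tau)\,D_n^\top\Gamma_n^{-1}X_n\,h(\tau)$ --- the same quantity as your $n^{-1}D_n^\top\Gamma_n^{-1}D_n$ term, since $D_n^\top\Gamma_n^{-1}\hat{X}_n=0$ --- after which Slutsky and the continuous-mapping step give the noncentral $\chi^2_k$ limit exactly as you argue. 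The contiguity/Le Cam's third lemma formalization you offer is not invoked in the paper, which relies directly on the linearization argument, but this is a difference in how the remainder terms are justified rather than a different approach.
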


\begin{proof}
    Let $\hat{W}^d_n$ be defined as in \eqref{eq:hatprocess}, and
\begin{equation*}
    {W}^d_n=\left\{{W}_n^d(\tau)=n^{-1/2}\sum_{i=1}^n d_{ni} [a^*_{ni}(\tau)-(1-\tau)], \; 0< \tau <1\right\}
\end{equation*}
where $a^*_{ni}(\tau)=\mathbf{1}\{U_{ni}>F^{-1}(\tau)\}$.

The proof of Theorem~1 in \citet{gutenbrunner1992regression} shows that
\begin{equation}\label{eq:difference}
\hat{W}_n^d-W_n^d = -n^{1/2}\left(\hat{G}^d_n(\tau)-G^d_n(\tau)\right) + o_p(1),
\end{equation}
where
\begin{equation*}
    \hat{G}^d_n(\tau)=n^{-1}\sum_{i=1}^n d_{ni} \mathbf{1}\{U_i \le  m_{ni}^\top \hat{t}_n(\tau)/\sigma_{ni}\}
\end{equation*}
and 
\begin{equation*}
    G^d_n(\tau)=n^{-1}\sum_{i=1}^n d_{ni} \mathbf{1}\{U_i \le  m_{ni}^\top t(\tau)/\sigma_{ni}\}
\end{equation*}
are empirical processes, whose definitions refer to the full model. Here 
\[
\hat{t}_n(\tau) =
\begin{bmatrix}
\hat{\beta}_n(\tau) - \beta \\[2pt]
\hat{\gamma}_n(\tau) - \gamma
\end{bmatrix},
\qquad
t(\tau) =
\begin{bmatrix}
\beta(\tau) - \beta \\[2pt]
\gamma(\tau) - \gamma
\end{bmatrix},
\qquad
m_{ni} =
\begin{bmatrix}
x_{ni} \\[2pt]
z_{ni}
\end{bmatrix}.
\]

Using equation \eqref{eq:difference} and a first-order linearization with respect to the regression parameters of the difference $\hat{G}^d_n(\tau)-{G}^d_n(\tau)$, since $\hat{W}^d_n$ is evaluated under $H_0^{\{1,\ldots,k\}}$ while the true model is given by $H_n^{\{1,\ldots,k\}}$, we obtain
\begin{equation*}
    \hat{W}^d_n-{W}^d_n=n^{-1/2}f\circ F^{-1}(\tau) D_n^\top\Gamma_n^{-1}Z_n \hat{\gamma}_n(\tau)
    -n^{-1/2}f\circ F^{-1}(\tau) D_n^\top\Gamma_n^{-1}M_n
    \begin{bmatrix}
        h(\tau) \, {n}^{-1/2} \\ {\gamma}(\tau)
    \end{bmatrix} + o_p(1)
\end{equation*}
where $M_n=\begin{bmatrix}
        X_n & Z_n
    \end{bmatrix}$.

We compare this linearization under the null model and under local alternatives. Under the null model, the relation
\begin{equation*}
    \hat{W}^d_n-{W}^d_n=n^{-1/2}f\circ F^{-1}(\tau)D_n^\top\Gamma_n^{-1}Z_n \left(\hat{\gamma}_n(\tau)-{\gamma}(\tau)\right) + o_p(1)
\end{equation*}
is used to prove Theorem \ref{theo:process} by \citet{gutenbrunner1992regression}. On the other hand, under the local alternatives, the additional term 
\begin{equation}\label{eq:meanchisq}
    n^{-1}f\circ F^{-1}(\tau) D_n^\top\Gamma_n^{-1}X_n h(\tau)
\end{equation}
is deterministic. By standard matrix algebra we get
\begin{equation*}
n^{-1}D_n^\top\Gamma_n^{-1}X_n=(C^{-1}_n)_{11}=C_n^{11},
\end{equation*}
which is the $(1,1)$ element of the inverse of $C_n$.

Hence, using the same argument as in Theorem~\ref{theo:tstatnull} and Slutsky's theorem once more, the statistic $S_n$ has $k$-variate normal limiting distribution with covariance matrix $A$, as before, and mean vector~\eqref{eq:meanchisq}. Further, by standard properties of the normal distribution, the test statistic $T_n={S_n^\top A^{-1}_n S_n}$ converges in distribution to $T \sim \chi^2_k(\zeta)$, a noncentral chi-squared distribution, as stated in the theorem.
\end{proof}

A few remarks conclude this section. When defining the hypotheses of interest, values different from zero can also be tested. For simplicity, consider a single quantile $\tau_j$ and the hypothesis $H_0^j:\beta(\tau_j)=\beta_0(\tau_j)$, for a fixed value $\beta_0(\tau_j)\in\mathbb{R}$. The hypothesis is tested by specifying a null model with $\beta(\tau_j)=\beta_0(\tau_j)$, and optimizing with respect to the remaining regression parameters. This corresponds to the standard use of an offset term in regression modeling. All theoretical results presented in this section remain valid without modification.

Finally, additional care is required in estimating the elements $\sigma_{ni}$, especially when allowing for heteroscedasticity. This issue is discussed by~\citet{koenker1999goodness} and~\citet[Section~3.4]{koenker2005quantile}, who show that these quantities can be estimated by evaluating the conditional density of the response at the $\tau$th conditional quantile, denoted $f_{ni}(Q_{ni}(\tau|z_i))$. Intuitively, such estimation is challenging in finite samples. In the independent but non-identically distributed setting, we follow~\citet{hendricks1992hierarchical} and~\citet{koenker1999goodness}, estimating the density via the difference quotient 
\begin{equation}\label{eq:sparsity}
\hat{f}_{ni}(Q_{ni}(\tau)|z_{ni})
= \frac{2h_n}{z_{ni}^\top \bigl(\hat{\gamma}(\tau + h_n) - \hat{\gamma}(\tau - h_n)\bigr)}
= \frac{2h_n}{d_{ni}}.
\end{equation}
For the selection of the bandwidth parameter $h_n$, we follow the Hall--Sheather rule~\citep{hall1988distribution} as adopted by~\citet{koenker1999goodness}. Since the denominator in~\eqref{eq:sparsity} may not be positive, $\hat{f}_{ni}(Q_{ni}(\tau)|z_{ni})$ is replaced by 
\begin{equation*}
\hat{f}^+_{ni}(Q_{ni}(\tau)|z_{ni})=\max\left\{0.01,\frac{2h_n}{d_{ni}-\varepsilon}\right\},
\end{equation*}
where $\varepsilon>0$ is a small tolerance parameter. This procedure estimates the elements of $\Gamma_n$ up to a scale factor independent of $i$, which cancels in the expression for $\hat{X}_n$~\citep{koenker1999goodness}. The choice of $0.01$ is made for computational reasons and is practically irrelevant.

\section{Generalized multivariate rank-score test}\label{gen_rank_test}

While the test described in the previous section is appealing for its simple mathematical formulation and reference distribution, its power can be significantly improved without compromising the type I error control.

To begin with, consider the distribution of the test statistic under alternatives, in the setting of Theorem~\ref{theo:alt}. Once the noncentrality parameter is made explicit, the limitations in terms of statistical power become evident. Consider, for example, the case $k=2$; the noncentrality parameter is 
\begin{equation*}
    \zeta=g_1^2a_{11}+2g_1g_2a_{12}+g^2_2a_{22}
\end{equation*}
where $A=V\Delta$ and $a_{\ell r}$ denotes the $(\ell,r)$th entry of the matrix $A^{-1}$. Thus, the power of the test decreases as a function of the correlation among the test statistics. Furthermore, the finite-sample counterpart of $A$ is $A_n=V_n\Delta$, where $\Delta$ is deterministic and depends only on the chosen quantile levels.

It is possible to replace the matrix $A_n^{-1}$ in the definition of the test statistic~\eqref{eq:tn} with an alternative weighting matrix. The theory of multivariate score tests with different weighting matrices is developed in \citet{goeman2006testing,goeman2011testing} for mean regression models. Motivated by this framework, we introduce an alternative test statistic
\begin{equation*}
        T_n^*=S_n^\top B S_n,
    \end{equation*}
where $B$ is any symmetric, positive-definite, $k \times k$ matrix. Under the null and alternative hypotheses, the original statistic $T_n$ converges in distribution to a (noncentral) $\chi^2_k$ random variable (Theorems~\ref{theo:tstatnull}-\ref{theo:alt}). By contrast, $T_n^*$ converges to a weighted sum of $k$ independent (noncentral) $\chi^2_1$ random variables:

\begin{Theorem}\label{theo:tstatnull_2}
    Let Assumptions~\ref{ass:density}--\ref{ass:design5} hold and suppose that $H_0^{\{1,\ldots,k\}}$ is true. Then $T_n^*\xrightarrow{d} T^*\sim \sum_{i=1}^k \lambda_i \chi^2_1$, where the weights $\lambda_i$ are the eigenvalues of the product matrix $BA$.
\end{Theorem}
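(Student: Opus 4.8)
The plan is to deduce the limiting law of $T_n^*$ from the Gaussian limit of the score vector already established in Theorem~\ref{theo:tstatnull}, and then to identify the distribution of a quadratic form in a centered Gaussian vector. From the proof of Theorem~\ref{theo:tstatnull} we have, under $H_0^{\{1,\ldots,k\}}$, that $S_n\xrightarrow{d}S\sim\mathcal{N}(0,A)$ with $A=V\Delta$. Since the map $s\mapsto s^\top B s$ is continuous on $\mathbb{R}^k$, the continuous mapping theorem gives immediately $T_n^*=S_n^\top B S_n\xrightarrow{d}S^\top B S$. It then remains to show that $S^\top B S$ has the distribution of $\sum_{i=1}^k\lambda_i\chi^2_1$ with mutually independent $\chi^2_1$ summands and with $\lambda_i$ the eigenvalues of $BA$.

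For this last step I would first note that $A$ is symmetric positive definite: $\Delta$ is the covariance matrix of a Brownian bridge evaluated at the distinct points $\tau_1,\dots,\tau_k$, hence positive definite, and $V>0$ by Assumption~\ref{ass:design5}. Thus the symmetric positive-definite square root $A^{1/2}$ exists and is invertible, and we may write $S=A^{1/2}Z$ with $Z\sim\mathcal{N}(0,I_k)$, so that
\[
S^\top B S = Z^\top A^{1/2}BA^{1/2}Z .
\]
The matrix $A^{1/2}BA^{1/2}$ is symmetric positive definite (as $B$ is), hence admits a spectral decomposition $A^{1/2}BA^{1/2}=P\,\mathrm{diag}(\lambda_1,\dots,\lambda_k)\,P^\top$ with $P$ orthogonal. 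Putting $\tilde Z=P^\top Z\sim\mathcal{N}(0,I_k)$ yields $S^\top B S=\sum_{i=1}^k\lambda_i\tilde Z_i^2$, a sum of independent $\lambda_i\chi^2_1$ variables. Finally $A^{1/2}BA^{1/2}$ and $BA$ are similar, since $A^{-1/2}(A^{1/2}BA^{1/2})A^{1/2}=BA$, so they share the same spectrum; hence the weights are the eigenvalues of $BA$ (equivalently of $AB$), as claimed.

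This is essentially the classical characterization of quadratic forms in Gaussian vectors, so I do not anticipate a real obstacle; the points deserving care are only (i) checking that $A$ is nonsingular, which is where positivity of $V$ and distinctness of the quantile levels enter, and (ii) being explicit that the $\chi^2_1$ components are mutually independent, which follows from orthogonality of $P$. As an alternative to the square-root construction, one could compute the characteristic function of $S^\top B S$ directly, obtaining $t\mapsto\det(I_k-2\mathrm{i}t\,BA)^{-1/2}$, and recognize it as the characteristic function of $\sum_{i=1}^k\lambda_i\chi^2_1$; this route makes transparent why only the spectrum of $BA$ enters the limit.
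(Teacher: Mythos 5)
Your proposal is correct and follows essentially the same route as the paper: standardize $S$ via $A^{1/2}$, diagonalize $A^{1/2}BA^{1/2}$ to get a sum of independent scaled $\chi^2_1$ variables, and identify the weights with the eigenvalues of $BA$ by similarity. You are slightly more explicit than the paper about the continuous mapping step and the nonsingularity of $A$, but the argument is the same.
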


\begin{proof}

    From Theorem~\ref{theo:tstatnull}, $S_n \xrightarrow{d} S \sim N(0,A)$. Hence, we may write $A^{-1/2}S= S_z \sim N(0, I_k)$. Then
    \begin{equation*}
        T^*=S^\top B S =S^\top_z \left(A^{1/2} B A^{1/2}\right) S_z = \sum_{i=1}^k \lambda_i (b_i^\top S_z)^2,
    \end{equation*}
    where $b_i$ denote the eigenvectors of the matrix $A^{1/2} B A^{1/2}$. Notice that, by the rotational invariance of the multivariate normal distribution, and being the eigenvectors orthogonal by definition, the random vector $(b_1^\top S_z,\ldots,b_k^\top S_z)^\top$ has components which are independent standard normal random variables.
    The final claim follows from the fact that, given the assumptions made, the eigenvalues of $A^{1/2} B A^{1/2}$ and $BA$ coincide. 
\end{proof}
The reference distribution cannot be derived analytically, but it can be approximated numerically, for instance using the method proposed by \citet{imhof1961computing}. Further, by the definition of $B$, the weights $\lambda_i$ of the distribution are all strictly positive.

While the choice of $B$ does not affect the control of the type I error, it is crucial for the power properties of the test. When testing multiple parameters simultaneously, it is generally not possible to maximize power against all alternatives \citep{goeman2006testing}. Indeed, if the region of alternatives is viewed as a $k$-dimensional ball centered at the null, the intersection test is primarily sensitive along a particular direction. The choice of $B$ effectively rotates this direction, influencing which alternatives the test is most powerful against. The following theorem derives the distribution of the test statistic under local alternatives.

\begin{Theorem}
    Let Assumptions~\ref{ass:density}--\ref{ass:design5} hold and suppose that $H_n^{\{1,\ldots,k\}}$ is true. Then $T_n^*\xrightarrow{d} T^*\sim \sum_{i=1}^k \lambda_i \chi^2_1(\zeta_i)$,
    where $\zeta_i=(b_i^\top g_z)^2$ and $g_z=A^{-1/2}g$.
\end{Theorem}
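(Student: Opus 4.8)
The plan is to mimic the proof of Theorem~\ref{theo:tstatnull_2}, replacing the central limit for $S_n$ under the null with the noncentral one established in the proof of Theorem~\ref{theo:alt}. First I would recall that, under $H_n^{\{1,\ldots,k\}}$, the argument of Theorem~\ref{theo:alt} gives $S_n \xrightarrow{d} S \sim N(g, A)$, where $g \in \mathbb{R}^k$ has components $g_j = C^{11} f\circ F^{-1}(\tau_j)\,h(\tau_j)$ and $A = V\Delta$. This is the only place the local-alternative hypothesis enters; everything else is linear algebra on a Gaussian vector, so the chief task is to carry the mean vector through the same whitening-and-diagonalization steps used in the central case.

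Next I would apply the whitening transformation $S_z := A^{-1/2} S$, so that $S_z \sim N(g_z, I_k)$ with $g_z = A^{-1/2} g$, exactly as defined in the statement. Writing $T_n^* \xrightarrow{d} T^* = S^\top B S = S_z^\top (A^{1/2} B A^{1/2}) S_z$ and diagonalizing the symmetric positive-definite matrix $A^{1/2} B A^{1/2} = \sum_{i=1}^k \lambda_i\, b_i b_i^\top$ with orthonormal eigenvectors $b_1,\ldots,b_k$ and eigenvalues $\lambda_i$ (which, as noted in Theorem~\ref{theo:tstatnull_2}, coincide with those of $BA$), one obtains
\begin{equation*}
T^* = \sum_{i=1}^k \lambda_i (b_i^\top S_z)^2 .
\end{equation*}
By rotational invariance of the Gaussian and orthonormality of $\{b_i\}$, the scalars $b_i^\top S_z$ are jointly independent, each normal with unit variance and mean $b_i^\top g_z$. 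Hence $(b_i^\top S_z)^2 \sim \chi^2_1(\zeta_i)$ with noncentrality $\zeta_i = (b_i^\top g_z)^2$, and the $k$ terms are mutually independent, which yields $T^* \sim \sum_{i=1}^k \lambda_i \chi^2_1(\zeta_i)$ as claimed. I would also remark that using Slutsky's theorem lets us replace $A_n$, $B$-dependent sample quantities by their limits without changing the limiting law, exactly as in the earlier proofs.

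I do not expect a genuine obstacle here, since the result is essentially a bookkeeping exercise layered on Theorem~\ref{theo:alt} and the eigendecomposition already performed in Theorem~\ref{theo:tstatnull_2}. The one point deserving a little care is the claim that $b_i^\top S_z$ remains normal \emph{with shifted mean} under rotation: rotational invariance of the standard Gaussian applies to the centered part, so one should phrase it as $S_z = g_z + Z$ with $Z \sim N(0, I_k)$, note that $(b_i^\top Z)_{i=1}^k$ are i.i.d.\ standard normal because $[b_1 \cdots b_k]$ is orthogonal, and then add back the deterministic shifts $b_i^\top g_z$. A secondary subtlety is that the eigenvectors $b_i$ of $A^{1/2} B A^{1/2}$ are not the same as the eigenvectors of $BA$, so $\zeta_i$ must be defined through the $b_i$ of the symmetric matrix (as the statement does); this is consistent with how $\lambda_i$ and $b_i$ were introduced in the proof of Theorem~\ref{theo:tstatnull_2}, and no further reconciliation is needed.
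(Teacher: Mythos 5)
Your proposal is correct and follows essentially the same route as the paper's proof: whiten via $A^{-1/2}$, diagonalize $A^{1/2}BA^{1/2}$, and carry the mean $g_z$ through as a location shift so each term becomes an independent $\chi^2_1((b_i^\top g_z)^2)$ scaled by $\lambda_i$. Your explicit decomposition $S_z = g_z + Z$ is just a slightly more careful phrasing of the same argument.
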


\begin{proof}
    The proof proceeds along the same lines as that of Theorem~\ref{theo:tstatnull_2}. From Theorem \ref{theo:alt}, $S_n \xrightarrow{d} S \sim N(g,A)$. Then
    \begin{equation*}
    T^{*}=(S_z+g_z)^\top A^{1/2}BA^{1/2} (S_z+g_z) = \sum_{i=1}^k \lambda_i [b_i^\top (S_z+g_z)]^2
    \end{equation*}
    where the elements $\lambda_i$ and $b_i$ are, respectively, the eigenvalues and eigenvectors of the matrix $A^{1/2} B A^{1/2}$. The properties of the normal distribution apply as in Theorem \ref{theo:alt}, since each element $b_i^\top g_z$ applies as a location shift of a standard normal random variable.
\end{proof}

We conclude this section with some practical remarks. The choice of the weighting matrix clearly affects the distribution of the test statistic under alternatives in multiple ways, influencing both the noncentrality parameters and the weights of the chi-square-based reference distribution. Moreover, the convergence rate to the asymptotic distribution depends on the selected quantile levels. Hence, defining an optimal weighting matrix is not straightforward, even when prioritizing specific quantiles, while simultaneously accounting for the multiplicity inherent in the global testing problem.

Consider the identity matrix $I_k$ as the weighting matrix, and $S$ to be a generic zero-mean normal random vector with positive-definite variance. Then $(S-g)^\top(S-g)$ is a weighted Euclidean distance, with weights given by the variance matrix of $S$. By performing a singular value decomposition of this matrix,
\begin{equation*}
(S-g)^\top(S-g) = \sum_{i=1}^k \lambda_i \left(S_{i}+\frac{b_i^\top g}{\sqrt{\lambda_i}}\right)^2.
\end{equation*}
If the variance matrix is equicorrelated with identical diagonal elements, the power is equally distributed across the elements of $g$. In this scenario, the overall power depends on the magnitude of the correlation, as can be seen from the explicit form of the eigenvalues and eigenvectors. In this case, the first eigenvector has identical entries, the sum of the eigenvalues is constant, and the first eigenvalue increases with the correlation. If this eigenvalue dominates the others, the distance induced by $g$ is mainly in the first principal component. The test is primarily sensitive in this direction, leading to increased power. Heterogeneity in either the diagonal or off-diagonal elements of the variance matrix influences both the distribution of power among the components and the overall average power.

Returning to the quantile regression framework, the variance matrix depends only on the quantile levels, up to the scalar factor, and is therefore a deterministic, known quantity. Hence, differences in variances across components affect the distribution of the power among the elements of $g$, penalizing the extreme quantiles. To mitigate this issue, different weighting matrices can be used. This should be viewed as distributing the overall power differently, rather than providing a possible maximization over the entire set of quantiles. For example, to correct for the imbalance induced by heterogeneous variances, one may use the weighting matrix $\mbox{diag}\{\Delta\}^{-1}$. Another natural choice arises when assumptions are made on the distribution of the errors.
In this case, the elements of the vector $g$ depend on the error density computed at the target quantiles. To balance power across the alternatives $h(\tau_j)$, the diagonal of the weighting matrix can be set equal to the reciprocals of these densities.
However, if the assumed error distribution is misspecified, this procedure does not properly redistribute power across the true alternatives, but rather across a more complex combination of the true (unknown) distribution and the assumed one.

\section{Simulation study}\label{sim}

For all simulation studies, we set the sample size to $n=100$ and the significance level to $\alpha=0.05$. For readability, the dependence on $n$ is suppressed in the mathematical notation. Covariate pairs $(x_i, z_i)^T$ are drawn from a bivariate normal distribution with mean zero and an equicorrelated covariance matrix with correlation~$0.3$, while different generating mechanisms are considered for the responses $y_i$. Each simulation scenario is replicated $1{,}000$ times. 

First, we define five individual hypotheses, each testing whether the effect of $x_i$ is zero at the quantile levels $\{0.1, 0.25, 0.5, 0.75, 0.9\}$. This results in $2^5 -1=31$ possible intersection hypotheses, excluding the one corresponding to the empty set, as defined in Section~\ref{closed_testing}. We then examine the type I error control of the Wald-type test of~\citet{koenker2005quantile} and the proposed generalized rank-score test, fixing $B = I_5$ (or the appropriate sub-matrix for each hypothesis) as in Section~\ref{gen_rank_test}.
In this setting, $y_i \sim \mathcal{N}(0.5+0.5\,z_i, 1)$, so there is no effect of $x_i$ at any quantile level.
The resulting $p$-values, averaged over simulations, are shown in Fig.~\ref{fig:error}.
Only three of the 31 $p$-values fall within the confidence bands for the Wald-type tests, indicating inadequate type~I error control, whereas all generalized rank-score $p$-values lie within the bands, showing proper control. Power is then evaluated by computing the proportion of simulations in which an effect of $x_i$ is detected, and compared between the generalized rank-score tests with $B = A^{-1}$ (Section~\ref{rank_score}) and $B = I_5$ (Section~\ref{gen_rank_test}), or appropriate sub-matrices based on each specific hypothesis. Under $y_i \sim \mathcal{N}(0.5+0.6 x_i + 0.5z_i, 1+ |x_i|)$, Fig.~\ref{fig:ct2} shows that the identity weighting improves power for most intersection hypotheses, with no gains for individual hypotheses along the bisector as expected.

\begin{figure}
\centering
\includegraphics[width=.5\textwidth]{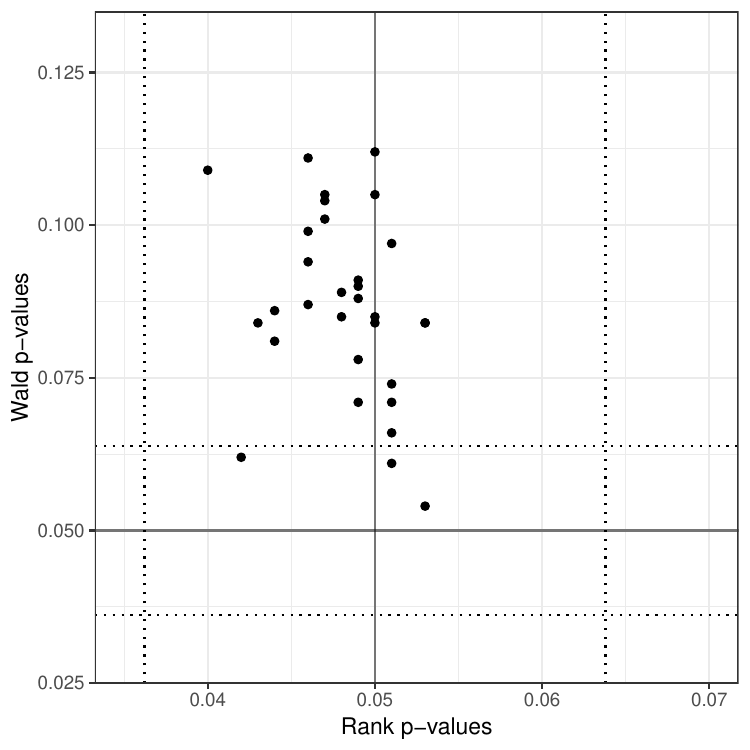}
\caption{Type I error control for Wald-type and generalized rank-score tests with $B = I_5$ for intersection hypotheses. Average $p$-values; solid grey lines correspond to $\alpha$, and dotted lines represent the $95\%$ simulation confidence interval for the empirical type I error.
}
\label{fig:error}
\end{figure}

\begin{figure}
\centering
\includegraphics[width=.5\textwidth]{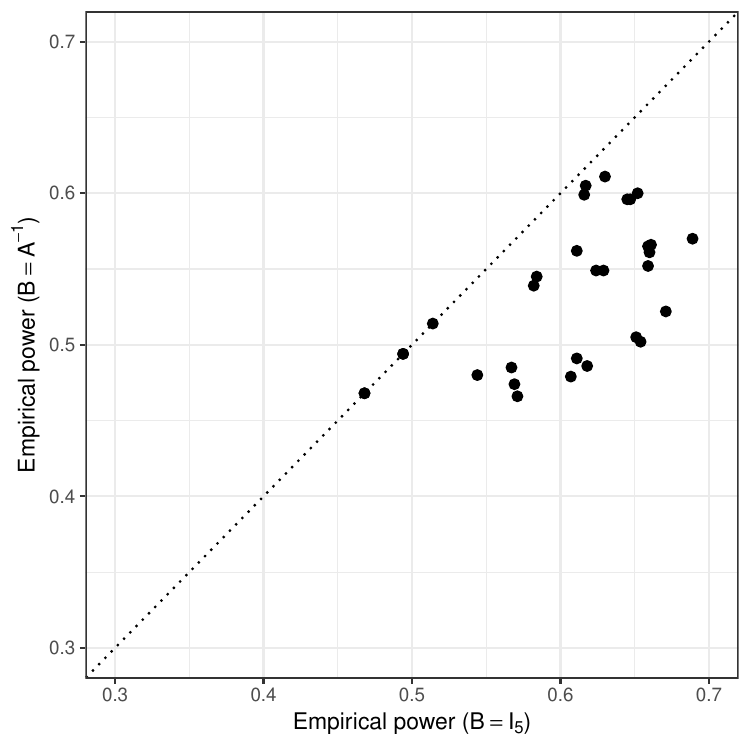}
\caption{
Empirical power of the generalized rank-score tests for the $31$ intersection hypotheses, comparing the identity weighting ($B = I_5$) with the inverse-covariance weighting ($B = A^{-1}$).
}
\label{fig:ct2}
\end{figure}

Next, still considering the quantile levels $\{0.1, 0.25, 0.5, 0.75, 0.9\}$, we compare the closed testing procedure with the generalized rank-score test as local test and $B=I_5$ to the permutation-based multiple testing procedure proposed by \citet{mrkvicka2023globalquantileregression}. Table~\ref{tab:fwer_perm} reports the empirical FWER under the null model
$y_i \sim \mathcal{N}(0.5 + 0.5 z_i,\; \sigma_i^2)$, where $\sigma_i^2 \in \{1,\; 1+|x_i|,\; 1+|z_i|\}$.
\begin{table}[t]
\centering
\caption{Empirical FWER for the closed testing and the permutation-based \citep{mrkvicka2023globalquantileregression} procedures under the three variance specifications.}
\label{tab:fwer_perm}
\begin{tabular}{l|rr}
\toprule
\textbf{$\sigma_i^2$} & \textbf{Closed Testing} & \textbf{Permutation-based test \citep{mrkvicka2023globalquantileregression}}\\
\midrule
$1$         & 0.036 & 0.039 \\
$1+|x_i|$   & 0.049 & 0.140 \\
$1+|z_i|$   & 0.037 & 0.044 \\
\bottomrule
\end{tabular}
\end{table}
The empirical power is then evaluated for 
$y_i \sim \mathcal{N}(0.5 + \beta x_i + 0.5 z_i,\; \sigma_i^2)$, with $\beta \in \{0.4, 0.6\}$ and the same variance specifications. Figure~\ref{fig:perm} reports the empirical power of the individual tests across the five quantile levels. In the settings $\sigma_i^2 = 1$ and $\sigma_i^2 = 1+|z_i|$, where both procedures control the FWER satisfactorily, the closed testing procedure is generally more powerful. The central panel, corresponding to $\sigma_i^2 = 1+|x_i|$, should be interpreted with caution, since in this case the permutation-based procedure of \cite{mrkvicka2023globalquantileregression} does not control the FWER, as shown in Table~\ref{tab:fwer_perm}, and therefore does not provide a valid basis for a power comparison. Since the permutation-based approach fails to provide reliable FWER control when $\sigma_i^2$ depends on $x_i$, and does not offer a power advantage in the other settings, we do not consider it further in what follows.

\begin{figure}
\centering
\includegraphics[width=\textwidth]{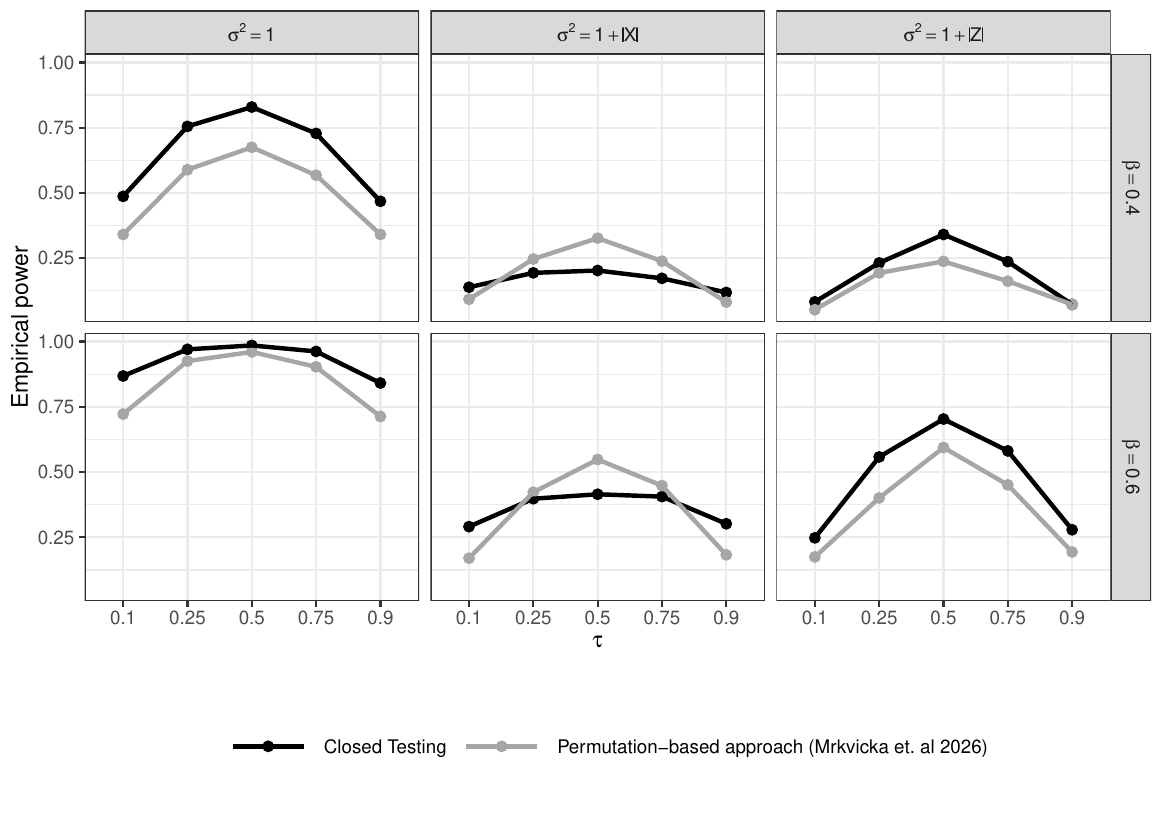}
\caption{Empirical power of the individual tests at quantile levels $\tau \in \{0.1, 0.25, 0.5, 0.75, 0.9\}$ under $\beta \in \{0.4, 0.6\}$. The columns correspond to the three variance settings $\sigma_i^2 \in \{1,\;1+|x_i|,\;1+|z_i|\}$.}
\label{fig:perm}
\end{figure}

We then finally compare the power of the Bonferroni and Holm--Bonferroni corrections with that of the proposed closed-testing procedure.
Three data-generating mechanisms are considered:
\begin{equation*}
\begin{aligned}
\text{(i)}\quad & y_i = 0.5+\beta x_i + {\gamma}z_i + (1+|x_i|)(3/5)^{1/2}\,w_i,\quad w_i \sim t_5,  \\[3pt]
\text{(ii)}\quad & y_i \sim \mathrm{SN}(0.5 + \beta x_i + {\gamma} z_i - 1.453, 3 + |x_i|, 2.2)
,\\[3pt]
\text{(iii)}\quad & y_i \sim \mathcal{N}\left(0.5+ \beta x_i + {\gamma}z_i,\, 1 + |x_i|\right).
\end{aligned}
\end{equation*}
Here $t_5$ is the Student's $t$-distribution with $5$ degrees of freedom, SN denotes the skew-normal distribution, {and $\gamma$ is set to $0.5$.}
For any quantile $\tau$, the true coefficient $\beta(\tau)$ depends not only on $\beta$, shared across quantiles, but also on quantile-specific features of the generating distribution. This follows from the definition of the quantile regression coefficient from Section~\ref{setting}.

\begin{figure}
\includegraphics[width=\textwidth]{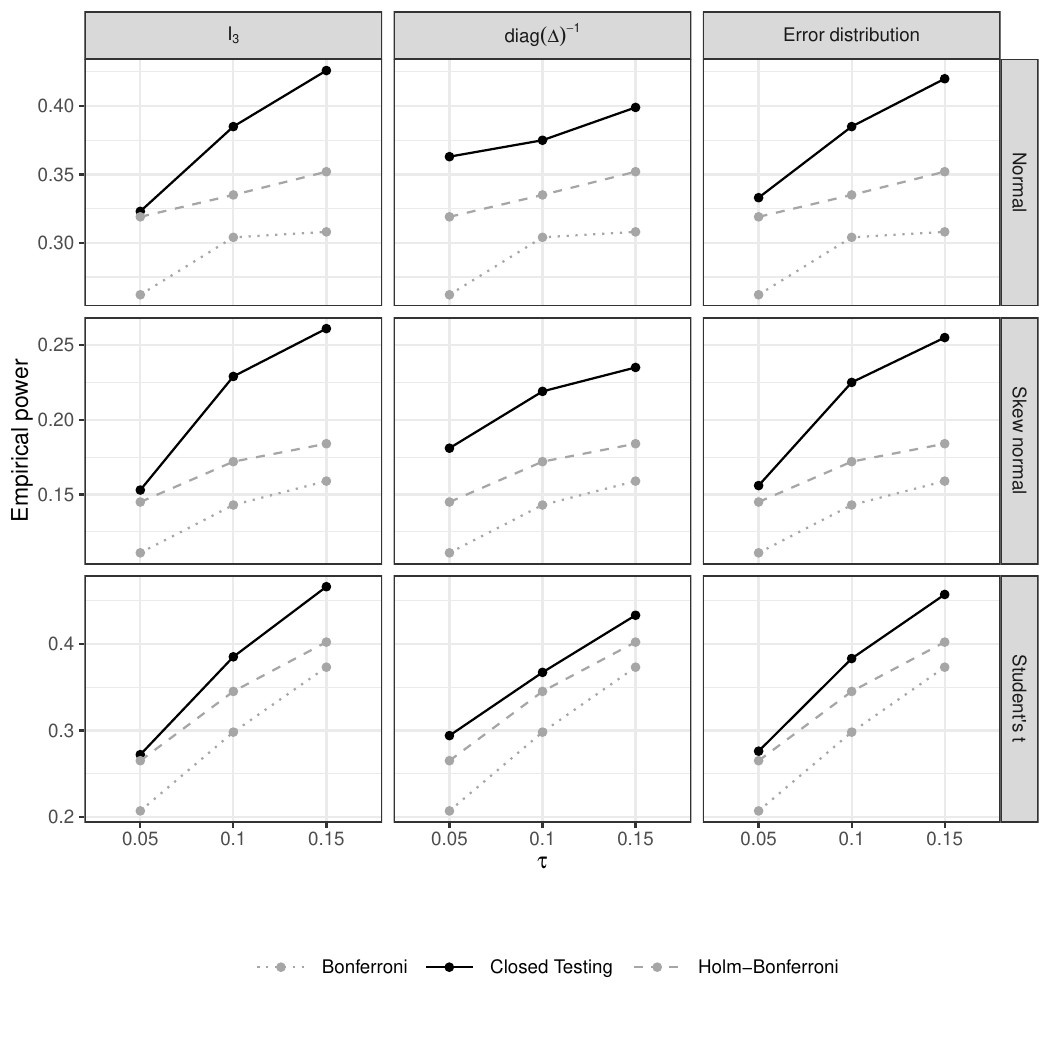}
\caption{
Empirical power for individual hypotheses. Solid black lines represent the closed testing procedure, while dashed and dotted grey lines represent the Holm-Bonferroni and Bonferroni corrections, respectively, all applied to generalized rank-score tests with $B = f \circ F^{-1}(\tau)$, $B=I_{3}$ and $B = \text{diag}\{\Delta\}^{-1}$.
}
\label{fig:power2}
\end{figure}

We compare the power of the proposed closed testing procedure with that of the Bonferroni and Holm–Bonferroni corrections for $\beta = 0.6$, focusing on extreme quantile levels, namely $\{0.05, 0.1, 0.15\}$, and considering three different specifications of $B$. We emphasize that the choice of the weighting matrix must be made prior to observing the data in order to avoid cherry picking. Nevertheless, Fig.~\ref{fig:power2} shows that the identity specification represents a safe choice, even in challenging settings corresponding to regions of the distribution with low density. Analogous results, not reported to avoid repetitions, were obtained for $\gamma = 0$ and $\beta \in \{0.4, 0.8, 1.2\}$, and for correlations between $x_i$ and $z_i$ equal to $0$ and $0.6$, indicating that the findings are robust to changes in nuisance parameters and covariate dependence. 

\begin{figure}
\includegraphics[width=\textwidth]{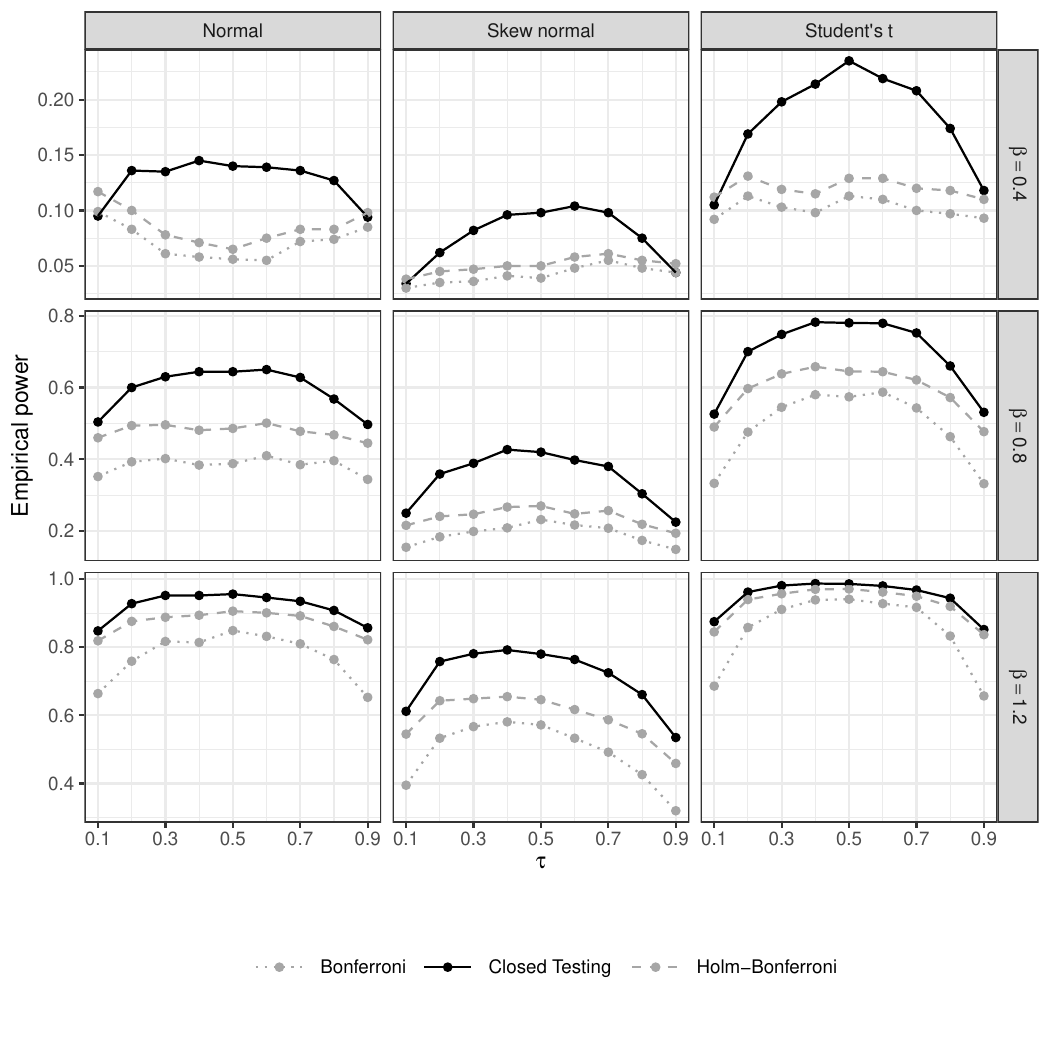}
\caption{
Empirical power for individual hypotheses. Solid black lines represent the closed testing procedure, while the dashed and dotted grey lines represent the Holm-Bonferroni and Bonferroni corrections, respectively, all applied to generalized rank-score tests with $B=I_{9}$.
}
\label{fig:power}
\end{figure}

Finally, the power behavior for $B = I_9$, shown in Fig.~\ref{fig:power}, is evaluated at each decile quantile. The three scenarios reveal several common patterns. Overall, our procedure attains higher power than traditional multiplicity corrections, with only minor exceptions, and displays consistent convergence toward the maximum attainable power as the true parameter increases. As before, analogous results (not reported) were obtained for different values of $\gamma$ and $\beta$, as well as for different correlations between $x_i$ and $z_i$.

\section{Real data application}
We analyze the dataset \texttt{birthwt} \citep{hosmer2013applied} available in the \texttt{R} package \texttt{MASS} \citep{venables2013modern}. The dataset contains 189 observations on infant birth weight together with several potential risk factors for low birth weight. Our main interest is in the effect of maternal smoking during pregnancy, coded as a binary variable (smokers/non-smokers), on birth weight measured in grams. 
The analysis adjusts for the following additional covariates: maternal age, maternal weight at the last menstrual period, the number of previous premature labors, and the number of physician visits during the first trimester, all treated as continuous variables; maternal race, included as a three-level factor; and history of hypertension and presence of uterine irritability, both included as binary factors. 
We simultaneously test the significance of the coefficient associated with smoking across the deciles $\{0.1, \dots, 0.9\}$. We report the unadjusted $p$-values, and we compare the multiple testing adjustments of Bonferroni, Holm, and the proposed closed-testing procedure, adopting the identity matrix as weight matrix.

\begin{table}[h]
\centering
\begin{tabular}{l|rrrr}
\toprule
\textbf{$\tau$} & \textbf{Unadjusted} & \textbf{Bonferroni} & \textbf{Holm-Bonferroni} & \textbf{Closed Testing} \\ \midrule
0.1 & 0.573 & 1.000 & 0.573 & 0.573 \\
0.2 & 0.141 & 1.000 & 0.314 & 0.212 \\
0.3 & 0.105 & 0.941 & 0.314 & 0.147 \\
0.4 & 0.007 & 0.059 & 0.039 & 0.030 \\
0.5 & 0.003 & 0.027 & 0.024 & 0.016 \\
0.6 & 0.001 & 0.011 & 0.011 & 0.009 \\
0.7 & 0.004 & 0.033 & 0.026 & 0.018 \\
0.8 & 0.023 & 0.207 & 0.115 & 0.053 \\
0.9 & 0.064 & 0.572 & 0.254 & 0.152 \\ \bottomrule
\end{tabular}
\caption{Unadjusted and multiplicity-adjusted $p$-values for testing the effect of maternal smoking on birth weight across the deciles $\{0.1,\ldots,0.9\}$ in the \texttt{birthwt} dataset.}
\label{tab:realdata}
\end{table}

The results are reported in Table \ref{tab:realdata}. The unadjusted $p$-values are markedly smaller, resulting in more rejections, but they provide no control for multiple comparisons. Among methods that take into account the multiplicity issue, our proposal leads to lower $p$-values (and hence greater statistical power) while maintaining FWER control.

\section{Discussion}\label{discussion}
This manuscript builds on quantile regression models, which are widely applicable and popular across several scientific communities. We construct a framework for safe inference, in terms of error control, when simultaneously testing the effect of a covariate of interest across multiple quantiles. At the same time, the procedure allows for a significant improvement in power over Bonferroni-type corrections without additional assumptions: this translates into additional true findings, that is, significant effects, without inflating the number of false positives. The proposed methodology is general and flexible, allowing the use of different weighting matrices, each with distinct power properties. However, the choice should not be made post-hoc to avoid the phenomenon known as cherry-picking.

We remark here that the extension to the case of multiple covariates of interest is also conceptually straightforward, requiring only routine matrix calculations; we do not pursue it here in order to avoid additional notational and algebraic complexity.

Future works may proceed in two directions, both related to extensions of the basic quantile regression models: on one hand, the possibility to specify a non-linear relationship between the outcome and the covariates; on the other hand, the introduction of cluster effects and general correlated observations. We believe that future developments in testing and confidence intervals for quantile models should rely on a rank-score-based approach. This framework is significantly more reliable in terms of the convergence of the test statistic to the nominal distribution compared to the more popular Wald-based inference.

\section*{Supplementary material}
\label{SM}
The code used for the simulation study is available at \url{https://github.com/angeella/quasar/tree/main/simulations}
, whereas the code used for the real-data application is available at \url{https://github.com/angeella/quasar/tree/main/application}

\bibliography{bibliography}
\bibliographystyle{apalike}

\end{document}